\newtheorem{proposition}{Proposition}
\newtheorem*{conjecture*}{Conjecture}
\DeclareMathOperator*{\argmin}{arg\,min}
\title{Online Convex Optimization of Programmable Quantum Computers to Simulate Time-Varying Quantum Channels}
\author{
\IEEEauthorblockN{Hari Hara Suthan Chittoor,~\IEEEmembership{Member,~IEEE}, Osvaldo Simeone,~\IEEEmembership{Fellow,~IEEE},\\ Leonardo Banchi and Stefano Pirandola\vspace{-0.9cm}}}
\begin{document}

\maketitle

\thispagestyle{empty}	

\pagestyle{empty}


\let\thefootnote\relax\footnotetext
{
Hari Hara Suthan Chittoor and Osvaldo Simeone are with King’s Communications, Learning, and Information
Processing (KCLIP) lab at the Department of Engineering of Kings College
London, UK (emails: hari.hara@kcl.ac.uk, osvaldo.simeone@kcl.ac.uk). Their work has been supported by the European Research Council (ERC) under the European Union's Horizon 2020 Research and Innovation Programme (Grant Agreement No. 725731), and Osvaldo Simeone has also been supported by an Open Fellowship of the EPSRC (EP/W024101/1). For the purpose of open access, the author has applied a Creative Commons Attribution (CC BY) licence to any Author Accepted Manuscript version arising. The authors acknowledge use of the research computing facility at King’s College London, Rosalind (https://rosalind.kcl.ac.uk).

Leonardo Banchi is with the Department of Physics and Astronomy, University of Florence \& INFN sezione di Firenze, via G. Sansone 1, I-50019 Sesto Fiorentino (FI), Italy (email: leonardo.banchi@unifi.it). His work is supported by the U.S. Department of Energy, Office of Science, National
Quantum Information Science Research Centers, Superconducting Quantum Materials and Systems Center
(SQMS) under the contract No. DE-AC02-07CH11359.

Stefano Pirandola is with the Department of Computer Science, University of York, York YO10 5GH, UK (email: stefano.pirandola@york.ac.uk).


}

\begin{abstract}

Simulating quantum channels is a fundamental primitive in quantum computing, since quantum channels define general (trace-preserving) quantum operations. An arbitrary quantum channel cannot be exactly simulated using a finite-dimensional programmable quantum processor, making it important to develop optimal approximate simulation techniques. In this paper, we study the challenging setting in which the channel to be simulated varies adversarially with time. We propose the use of matrix exponentiated gradient descent (MEGD), an online convex optimization method, and analytically show that it achieves a sublinear regret in time. Through experiments, we validate the main results for time-varying dephasing channels using a programmable generalized teleportation processor.


\end{abstract}

\begin{IEEEkeywords}
Programmable quantum computing, convex optimization, online learning, quantum channel simulation
\end{IEEEkeywords}

\vspace{-0.2cm}

\section{Introduction}

A quantum computer can be programmed to carry out a given functionality in different ways, including the direct engineering of pulse sequences \cite{pulse_techniques_for_quantum_2016}, the design of parametric quantum circuits via quantum machine learning \cite{Book_Machine_learning_quantum_computers_Schuld_2021, Book_Osvaldo_quantum_machine_learning_for_engineers}, the use of adaptive measurements on cluster states \cite{Fault_tolerant_quantum_computation_cluster_states_Nielsen_2005}, and the optimization of a program state operating on a fixed quantum processor. A fundamental result derived in \cite{Nielsen_progm_Q_gate_arrays} states there is no universal programmable quantum processor that operates with finite-dimensional program states. Since a quantum processor is universal if it can implement any quantum operation, this conclusion implies that the exact simulation of an arbitrary quantum channel on a single programmable quantum processor is impossible. This, in turn, highlights the importance of developing tools for the optimization of quantum programs.


 Reference \cite{banchi_convex_opt_prog_q_comp_2020} addressed the problem of approximately simulating a quantum channel using a finite-dimensional program state. The authors proved that the error between the target channel and simulated channel, as measured by the diamond distance, as well as other related metrics,  is convex in the space of program states. Specifically, the optimal program state can be calculated using semidefinite programming.  In this paper, we study the more challenging setting illustrated in Fig. \ref{fig: programmable quantum computation}, in which the channel to be simulated varies over time. We adopt a worst-case formulation in which channel variation are arbitrary, and chosen by ``nature" in a possibly adversarial way.


\begin{figure}[t]
    \centering
    \includegraphics[height=1.8in]{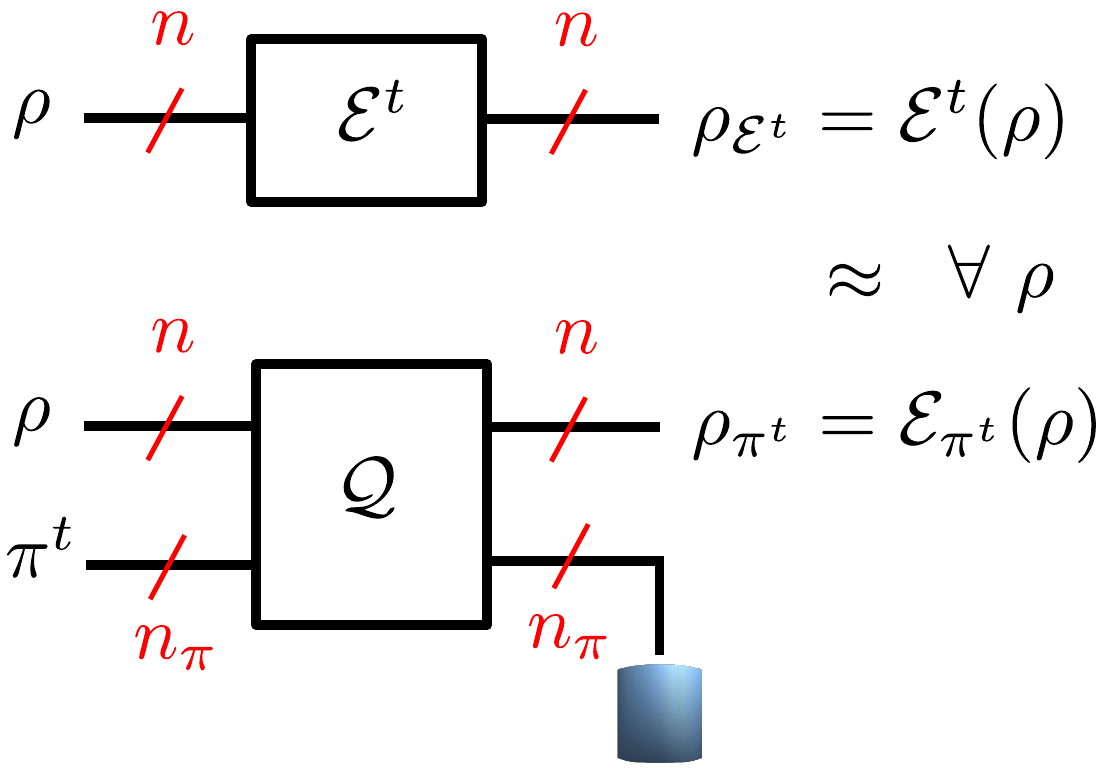} \vspace{-0.2cm}
    \caption{Time-varying quantum channel $\mathcal{E}^t$ (top) and its simulation $\mathcal{E}_{\pi^{t}}$ via a programmable quantum processor $\mathcal{Q}$ controlled by the time-varying program state $\pi^t$ (bottom).}
    \label{fig: programmable quantum computation}
    \vspace{-0.6cm}
\end{figure}



To study this setting, we propose to adopt the framework of online convex optimization \cite{Orabona_online_learning_notes}, which provides tools to 
track the optimal solution of time-varying convex problems. We specifically develop and analyze an online mirror descent algorithm over the space of positive definite matrices, yielding a matrix 
exponentiated gradient descent (MEGD) \cite{MEGD_tsuda_2005}. MEGD was previously used in the related, but distinct, problem of quantum state tracking \cite{MEGD_online_learning_of_quantum_state_NeurIPS_2018,MEGD_online_quantum_state_tracking_in_noisy_env}.

We prove that the regret of MEGD with respect to an optimized fixed program state is sublinear in time. We conduct experiments by adopting the generalized teleportation processor (GTP) as the programmable quantum processor. GTP can simulate exactly the class of teleportation-covariant channels, modeling Pauli and erasure channels \cite{Book_quantum_computing_quantum_information_Nielsen_chuang_2010}, and is operated here in an adversarial setting with time varying dephasing channels. Numerical results validate the analysis.

\noindent \textbf{Notations and Definitions:} 
For any non-negative integer $K$, $[K]$ represents the set $\{1,2,\cdots,K\}$. The number of elements in a set $\mathcal{A}$ is written as $|\mathcal{A}|$. Given two sets $\mathcal{A}$ and $\mathcal{B}$, we write $\mathcal{A} \setminus \mathcal{B} =\{a : a \in \mathcal{A} \text{ and } a \notin \mathcal{B}\}$. The symbol $\forall$ represents for all.
The Kronecker product is denoted as $\otimes$; $I$ represents the identity matrix, with dimensions clear from the context; $M^{\dagger}$ represents the complex conjugate transpose of the matrix $M$; and $\mathrm{tr}(M)$ represents trace of the matrix $M$. We adopt standard notations for quantum states, computational basis, and quantum gates \cite{Book_quantum_computing_quantum_information_Nielsen_chuang_2010}.

\section{Problem Formulation}
\label{section problem formulation}

In this section, we first review some background material and then describe the setting and problem of interest.

\subsection{Preliminaries}

Throughout this paper, we use the standard Dirac notation (see, e.g., \cite{Book_quantum_computing_quantum_information_Nielsen_chuang_2010}).
Given $n$ qubits, we let $\mathcal{D}(\mathcal{H})$ denote the space of all density matrices, i.e., positive semidefinite (PSD) matrices with unit trace, defined on the Hilbert space $\mathcal{H}$ of dimension $2^n$. Any $2^n \times 2^n$ Hermitian matrix $A$ can be written in terms of its eigendecomposition $A = \sum_{i} \lambda_i |v_i\rangle \langle v_i|$, where eigenvalues $
\{\lambda_i
\}$ are real and the set $\{|v_i\rangle \}$ consists of a basis of orthonormal vectors for the $2^n$-dimensional Hilbert space. Furthermore, the square root of a PSD matrix $A$ is defined as $\sqrt{A} = \sum_{i} \sqrt{\lambda_i} |v_i\rangle \langle v_i|$. More generally, a function $f(A)$ of PSD matrix $A$ is defined as $f(A) = \sum_{i} f(\lambda_i) |v_i\rangle \langle v_i|$.


A quantum channel $\mathcal{E}$ is a completely positive trace preserving (CPTP) linear map that takes a density matrix $\rho \in \mathcal{D}(\mathcal{H})$ as input to produce a density matrix $\mathcal{E}(\rho) \in \mathcal{D}(\mathcal{H}^{'})$ in a possibly distinct Hilbert space $\mathcal{H}^{'}$ of dimension $2^{n^{'}}$ for an integer $n^{'}$.  Furthermore, given a system of $2n$ qubits, we denote as $I\otimes \mathcal{E}$, where $I$ is the $2^n \times 2^n$ identity matrix, the channel that acts trivially on the first $n$ qubits and applies channel $\mathcal{E}$ to the last $n$ qubits.
    
A quantum channel $\mathcal{E}$ can be equivalently described by the $2^{2n} \times 2^{2n}$ PSD matrix obtained as the output of channel $I \otimes \mathcal{E}$ applied to  a system of $2n$ qubits in the Bell state $|\Phi^+ \rangle = 1/\sqrt{2^n}\sum_{i=0}^{2^{n} -1} |i\rangle \otimes |i\rangle $. This matrix, known as Choi matrix of the quantum channel $\mathcal{E}$, is hence defined as
\begin{equation}
    C_{\mathcal{E}} = (I \otimes \mathcal{E}) |\Phi^+ \rangle \langle \Phi^+ | = \frac{1}{2^n} \sum\limits_{i=0}^{2^n -1} \sum\limits_{j=0}^{2^n -1} |i\rangle \langle j| \otimes \mathcal{E}(|i\rangle \langle j|).    
\end{equation}



\vspace{-0.2cm}

\subsection{Setting}

\vspace{-0.1cm}

As shown in Fig. \ref{fig: programmable quantum computation}, we study the problem of simulating a time-varying quantum channel $\mathcal{E}^t$ operating on $n$ qubits using a programmable quantum processor $\mathcal{Q}$, where $t$ is a discrete time index $t =1,2,\ldots$ Specifically, the top part of Fig. \ref{fig: choi matrix illustation} illustrates the Choi matrix $C_{\mathcal{E}^t}$ of the quantum channel $\mathcal{E}^t$.
As depicted in the bottom part of Fig. \ref{fig: choi matrix illustation}, the programmable quantum processor is a fixed CPTP map $\mathcal{Q}$ operating on a register of $n + n_{\pi}$ qubits. Examples of quantum processors $\mathcal{Q}$ include generalized teleportation processor \cite{Bennett_teleportation_fundamental_paper_1993,Advances_in_teleportatin_pirandola_2015,channel_simulaiton_in_quantum_metrology_2018} and the port-based teleportation processor \cite{characterising_PBT_as_univ_simulator_of_Q_channels_stefano_2021,PBT_Ishizaka_2009}. The quantum processor $\mathcal{Q}$ is ``programmable" via a time-varying program state $\pi^t \in \mathcal{D}(\mathcal{H}_{\pi})$, where $\mathcal{H}_{\pi}$ is a $2^{n_{\pi}}$-dimensional Hilbert space.


\begin{figure}[t]
    \centering
    \includegraphics[height=2.2in]{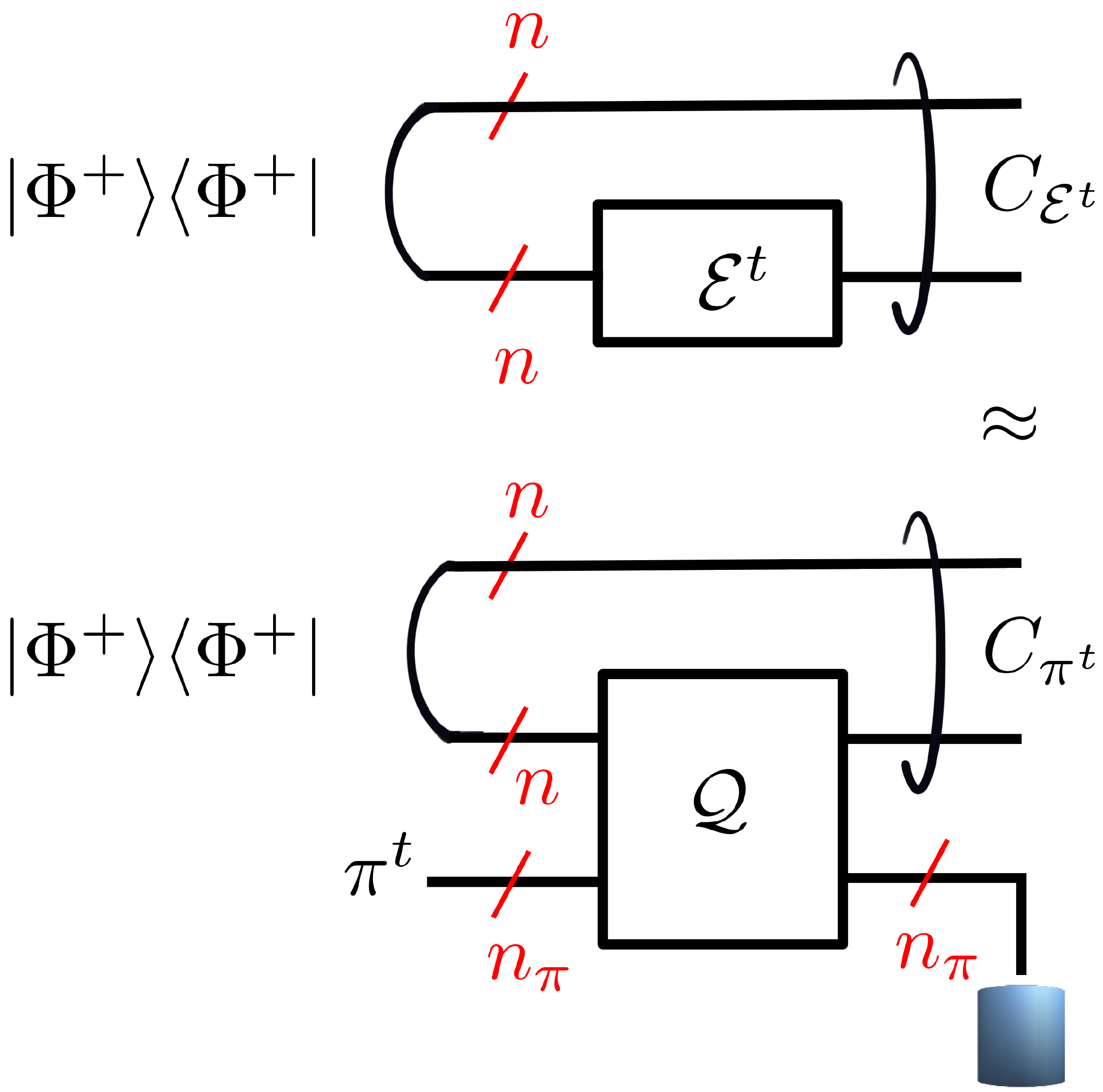} \vspace{-0.2cm}
    \caption{Illustration of the Choi matrix $C_{\mathcal{E}^t}$ of a quantum channel $\mathcal{E}^t$ (top) and of the Choi matrix $C_{\mathcal{E}_{\pi^t}} = C_{\pi^t}$ of its simulation $\mathcal{E}_{\pi^t}$ via a programmable quantum processor $\mathcal{Q}$ controlled by the program state $\pi^t$ (bottom).}
    \label{fig: choi matrix illustation}
    \vspace{-0.5cm}
\end{figure}


The register of $n+n_{\pi}$ qubits on which processor $\mathcal{Q}$ operates is initially in state $\rho \otimes \pi^t$, where $\rho \in \mathcal{D}(\mathcal{H})$ is the input density and $\pi^t$ the program state. After the application of processor $\mathcal{Q}$, we wish for the reduced state of a subset of $n$ qubits within the register of $n+n_{\pi}$ qubits to approximate the output state
\begin{equation}
    \rho_{\mathcal{E}^t} = \mathcal{E}^t(\rho)
\end{equation}
of channel $\mathcal{E}^t$ for any possible input state $\rho$. We refer to the mentioned subset of $n$ qubits as forming the output subregister. This may differ from the subregister consisting of the first $n$ qubits initially in the input state $\rho$ (see Sec. \ref{section experiments} for an example).

To simulate the channel $\mathcal{E}^t$, we optimize the sequence of program states $\pi^t$ sequentially over time $t \in [T]$. As we will detail in Section \ref{subsection problem definition}, at each time $t$, the optimizer has access to information about the quality of the approximation of the channels $\mathcal{E}^{\tau}$ obtained with program states $\pi^{\tau}$ at previous times $\tau \in [t-1]$.

Accordingly, given a program state $\pi^t \in \mathcal{D}(\mathcal{H}_{\pi})$, the programmable quantum processor $\mathcal{Q}$ implements the channel $\mathcal{E}_{\pi^{t}}$ defined by the mapping
\begin{equation}
\label{eq output of p.q.c}
    \mathcal{E}_{\pi^t} (\rho) = \mathrm{tr}_{\pi} (\mathcal{Q}(\rho \otimes \pi^t)),
\end{equation}
where $\mathrm{tr}_{\pi}(\cdot)$ is the partial trace over the $n_\pi$ qubits outside the output register. By (\ref{eq output of p.q.c}), as illustrated in Fig. \ref{fig: programmable quantum computation}, these $n_{\pi}$ qubits are discarded after the application of the operation $\mathcal{Q}$.

The simulation error at time $t$ is a measure of the difference between the channel $\mathcal{E}^t$ and the simulated channel $\mathcal{E}_{\pi^{t}}$. As illustrated in Fig. \ref{fig: choi matrix illustation}, this error can be measured by comparing the corresponding Choi matrices $C_{\mathcal{E}^t}$ and $C_{\mathcal{E}_{\pi^t}} = C_{\pi^t}$. Following \cite{banchi_convex_opt_prog_q_comp_2020}, the loss is specifically quantified by either the trace distance
\begin{equation}
    \label{eq trace distrance}
    \ell_1 (\mathcal{E}^t,\pi^t) = \frac{1}{2} \| C_{\mathcal{E}^t} - C_{\pi^t}\|_{\mathrm{tr}},
\end{equation}
where $\|O \|_{\mathrm{tr}} = \mathrm{tr}(\sqrt{O^{\dagger}O})$ is the trace of positive square root of matrix $O^{\dagger}O$; or alternatively, by the complement of the squared fidelity
\begin{equation}
    \label{eq infidelity metric}
    \ell_F (\mathcal{E}^t, \pi^t) = 1- \left(\mathrm{tr}\left( \sqrt{ \sqrt{C_{\mathcal{E}^t}} C_{\pi^t} \sqrt{ C_{\mathcal{E}^t} }}\right) \right)^{2}.
\end{equation}
We write $\ell(\mathcal{E}^t, \pi^t)$ to denote either loss (\ref{eq trace distrance}) or (\ref{eq infidelity metric}).



In \cite{banchi_convex_opt_prog_q_comp_2020}, the authors studied the problem 
\begin{equation}
\label{eq optimization of loss function}
    \min_{\pi \in \mathcal{D}(\mathcal{H}_{\pi})} \ell(\mathcal{E}, \pi)
\end{equation}
of optimizing the program state $\pi$ given a fixed quantum channel $\mathcal{E}$. Reference \cite{banchi_convex_opt_prog_q_comp_2020} proved that the optimization problem (\ref{eq optimization of loss function}) is convex over the program states $\pi$. Note that the work \cite{banchi_convex_opt_prog_q_comp_2020} considered also the diamond distance. which can be upper bounded via the loss functions (\ref{eq trace distrance}) and (\ref{eq infidelity metric}).


\subsection{Problem Definition}
\label{subsection problem definition}

Unlike \cite{banchi_convex_opt_prog_q_comp_2020}, we consider the problem of online optimization of the program state $\pi^t$ for time-varying channels $\mathcal{E}^t$ over time $t \in [T]$. We formulate the problem in an adversarial setting to obtain worst-case performance guarantees in terms of the possible sequence of channels $\mathcal{E}^{[T]} = \mathcal{E}^{1}, \mathcal{E}^{2}, \cdots, \mathcal{E}^{T} $.
Accordingly, at every time $t \in [T]$, the optimizer produces a program state $\pi^t$. Then, a quantum channel $\mathcal{E}^t$ is selected in an arbitrary way by ``nature", and the optimizer pays the loss $\ell(\mathcal{E}^t, \pi^t)$, which measures how poorly the simulated channel (\ref{eq output of p.q.c}) obtained with program state $\pi^t$ approximates channel $\mathcal{E}^t$. For every time $t\in [T]$, the optimizer produces an updated program state $\pi^{t+1}$ assuming access to a subgradient of the current loss, i.e.,  $g(\mathcal{E}^t, \pi^t) =\nabla_{\pi} \ell(\mathcal{E}^t, \pi^t)$.

Let us define the policy followed by the optimizer as the function
\begin{equation}
\label{eq policy}
    \pi^{t+1} = f^t \left(\pi^t,  g(\mathcal{E}^t, \pi^t) \right),
\end{equation}
which maps the current program state $\pi^t$ and subgradient $g(\mathcal{E}^t, \pi^t)$ to the next program state $\pi^{t+1}$. The goal is to design a sequence of functions $f^{[T]}(\cdot) = f^{1}(\cdot), f^{2}(\cdot), \cdots, f^{T}(\cdot)$ that performs well when compared to a fixed reference program state $\pi^*$ optimized based on knowledge of the sequence $\mathcal{E}^{[T]}$ of channels, i.e.,
\begin{equation}
\label{eq pi star}
    \pi^{*} = \argmin_{\pi \in \mathcal{D}(\mathcal{H}_{\pi})} \sum\limits_{t=1}^{T} \ell(\mathcal{E}^t, \pi).
\end{equation} For a sequence of channels $\mathcal{E}^{[T]}$, the performance of policy $f^{[T]}(\cdot)$ relative to the program state in $\pi^*$ (\ref{eq pi star}) is hence defined by the regret
\begin{equation}
\label{eq regret definition}
    \mathrm{Regret} (f^{[T]}, \mathcal{E}^{[T]}) = \sum\limits_{t=1}^{T} \ell(\mathcal{E}^t, \pi^t) - \sum\limits_{t=1}^{T} \ell(\mathcal{E}^t, \pi^*),
\end{equation}
with programs $\pi^t$ and $\pi^*$ given in (\ref{eq policy}) and (\ref{eq pi star}) respectively. 

To elaborate on the regret as the performance criterion of interest, observe first that, if the channel sequence $\mathcal{E}^{[T]}$ to be simulated were constant, i.e., if we had $\mathcal{E}^{t}=\mathcal{E}$ for some channel $\mathcal{E}$, obtaining a per-step regret $\ell(\mathcal{E}, \pi^t) - \ell(\mathcal{E}, \pi^*)$  that decreases with $t$ would indicate that the optimizer \eqref{eq policy} converges to the program that best approximates the channel in terms of the loss function $\ell(\mathcal{E},\pi)$. In the online setting under study, the channels $\mathcal{E}^{t}$ are allowed to vary arbitrarily, and the goal of the optimizer is to track such changes as they occur, i.e., as $t$ increases. Obtaining a small regret, irrespective of the channel sequence, provides evidence that the optimizer is extracting useful information about the single program $\pi^*$ that would have been optimal in hindsight. Specifically, following the standard online optimization framework \cite{Orabona_online_learning_notes}, we are interested in designing a policy $f^{[T]}$ that achieves a regret that grows sublinearly in $T$, implying that the per-step regret $\ell(\mathcal{E}^t, \pi^t) - \ell(\mathcal{E}^t, \pi^*)$ decreases over time $t$. 



\section{Matrix Exponentiated Gradient Descent}
\label{section MEGD and regret analysis}

In this section, we propose a policy for the problem of online channel simulation introduced in the previous section that is based on matrix exponentiated gradient descent (MEGD) \cite{MEGD_tsuda_2005}. We also analyze its regret, showing that it is sublinear in $T$.

    
\subsection{Matrix Exponentiated Gradient Descent (MEGD) for Online Channel Simulation}
\label{subsection MEGD for online channel simulation}

The proposed MEGD algorithm is an online mirror descent algorithm \cite{MEGD_tsuda_2005,Orabona_online_learning_notes} over the space of PSD matrices with unit trace. MEGD initializes the program state as the maximally mixed state $\pi^1 = I/2^{n_{\pi}}$. For every time $t\in [T]$ an arbitrary channel $\mathcal{E}^t$ is selected by nature, and the optimizer obtains the loss $\ell(\mathcal{E}^t, \pi^t)$. Based on the corresponding subgradient $g(\mathcal{E}^t, \pi^t)$, the optimizer updates the program state as
\begin{equation}
\label{eq MEGD update}
    \pi^{t+1} = \frac{\exp(Z^t)}{\mathrm{tr}(\exp(Z^t))},
\end{equation}
with matrix
\begin{equation}
    Z^t = \log (\pi^t) - \eta \Tilde{g}(\mathcal{E}^t, \pi^t),
\end{equation}
where $\Tilde{g}(\mathcal{E}^t, \pi^t) = \left(g(\mathcal{E}^t, \pi^t)+g(\mathcal{E}^t, \pi^t)^{\dagger}\right)/2$ represents the Hermitian part of the subgradient $g(\mathcal{E}^t, \pi^t)$ and $\eta >0$ is the learning rate.


    

To evaluate the subgradients $g(\mathcal{E}^t, \pi^t)$ for the losses (\ref{eq trace distrance}) and (\ref{eq infidelity metric}), we first define the quantum channel $\Lambda^t$ that maps program state $\pi^t$ to the corresponding Choi matrix $C_{\pi^t}$. This channel can be specified by its Kraus decomposition $\Lambda^t(\pi) = \sum_i A_i \pi A_i^{\dagger}$, where the $2^{2n} \times 2^n$ Kraus operators $\{A_i\}$ satisfy the condition $\sum_i A_i^{\dagger} A_i =I$, with $I$ being the $2^n \times 2^n$ identity matrix. The dual channel is defined as $\Lambda^t_{*}(\rho) = \sum_i A_i^{\dagger} \rho A_i$. Furthermore, we write the eigendecomposition of the Hermitian error operator $C_{\pi^t} - C_{\mathcal{E}^t}$ as $C_{\pi^t} - C_{\mathcal{E}^t} = \sum_{i} \lambda^t_i E^t_i$, with real eigenvalues $\{\lambda^t_i\}$ and eigenprojectors $\{E^t_i\}$. Following  \cite[Theorem $2$]{banchi_convex_opt_prog_q_comp_2020}, the subgradient with respect to PSD matrix $\pi$ of the loss functions $\ell_1 (\mathcal{E}^t,\pi)$ and $\ell_F (\mathcal{E}^t,\pi)$ evaluated at $\pi =\pi^t$ are given as
\begin{align}
    \label{eq gradient of trace distance}
    g_1(\mathcal{E}^t, \pi^t) &= \sum\limits_i \mathrm{sign}(\lambda^t_i) \Lambda_*^t(E^t_i),\\
    \label{eq gradient of infidelity metric}
    \text{and ~}g_F(\mathcal{E}^t, \pi^t) &= -\sqrt{1-\ell_F(\mathcal{E}^t, \pi^t)} \nabla L(\pi^t),
\end{align}
respectively, where $\mathrm{sign}(x)=1$ if $x\geq 0$ and $\mathrm{sign}(x)=-1$ if $x< 0$, and
\begin{align*}
    \nabla L(\pi^t) = \Lambda^t_* \left( \sqrt{C_{\mathcal{E}^t}} \left(\sqrt{C_{\mathcal{E}^t}} \Lambda^t(\pi^t) \sqrt{C_{\mathcal{E}^t}} \right)^{-\frac{1}{2}} \sqrt{C_{\mathcal{E}^t}} \right).
\end{align*}

Finally, for numerical stability, the MEGD update (\ref{eq MEGD update}) is implemented by replacing matrix $Z^t$ with the time-unrolled update \cite{MEGD_tsuda_2005}
\begin{equation}
\label{eq numerically stable MEGD update}
    Z^t = d^tI +\log (\pi^1) - \eta \sum_{\tau=1}^{t} \Tilde{g}(\mathcal{E}^{\tau}, \pi^{\tau}),
\end{equation}
where $\{d^t\}$ is a sequence of fixed constants. These constants do not affect an infinite-precision implementation of update (\ref{eq MEGD update}) with (\ref{eq numerically stable MEGD update}), but they can be useful to avoid numerical problems \cite{MEGD_tsuda_2005}. MEGD is summarized in Algorithm \ref{algorithm OLPQC}.


\begin{algorithm}[htbp]
\caption{Matrix exponentiated gradient descent (MEGD) for online channel simulation}
\label{algorithm OLPQC}
\begin{algorithmic}[1]
\State \textbf{Require:} 
Learning rate $\eta > 0$ and sequence of constants $d^t >0$ for $t\in [T]$.

\State Initialize the program state as $\pi^1 = I/2^{n_{\pi}}$

\For{$t\in [T]$}

\State Adversary selects a new channel $\mathcal{E}^t$

\State Optimizer obtains the loss $\ell(\mathcal{E}^t, \pi^t)$ using (\ref{eq trace distrance}) or (\ref{eq infidelity metric})

\State Optimizer computes subgradient $g(\mathcal{E}^t, \pi^t) $ of the loss $\ell(\mathcal{E}^t, \pi^t) $ at $\pi^t$ using (\ref{eq gradient of trace distance}) or (\ref{eq gradient of infidelity metric})

\State Optimizer updates the program state to $\pi^{t+1}$ using (\ref{eq MEGD update}) with (\ref{eq numerically stable MEGD update})
\EndFor

\end{algorithmic}
\end{algorithm}


\subsection{Regret Analysis}

To show that MEGD achieves sublinear regret in $T$, we start by interpreting the update rule \eqref{eq MEGD update} 
in terms of a regularized optimization problem that follows the mirror descent framework \cite{MEGD_tsuda_2005,Orabona_online_learning_notes,simeone2022machine}. To this end, we 
introduce the negative von Neumann entropy, defined as 
\begin{equation}
    \label{eq von Neumann entropy}
    F(\pi) = \mathrm{tr}(\pi \ln(\pi)) 
\end{equation}
for any state $\pi$. The Bregman divergence generated by function $F(\cdot)$ is given by   
\begin{align}
	B_F(\pi_1; \pi_2) &= F(\pi_1)-F(\pi_2) - \mathrm{tr}[\nabla F(\pi_2)(\pi_1-\pi_2)]= \nonumber\\ &=
\label{eq von Neumann divergence}
\mathrm{tr}(\pi_1 \ln(\pi_1) - \pi_1\ln (\pi_2)),
\end{align} 
and it corresponds to the quantum relative entropy \cite{Book_quantum_computing_quantum_information_Nielsen_chuang_2010}
between two density states $\pi_1$ and $\pi_2$, defined on the same Hilbert space. Following 
\cite{MEGD_tsuda_2005,Orabona_online_learning_notes}, the update rule \eqref{eq MEGD update} 
arises as the solution of the optimization problem
\begin{equation}
	\pi^{t+1} = \argmin_{\pi\in\mathcal D(\mathcal H_\pi)} \left(\eta~ \mathrm{tr}[\pi \tilde g(\mathcal E^t,\pi^t)]+ B_F(\pi;\pi^t) 
	\right),
\end{equation} where the first term is a linearization of the per-step loss and the second is a regularizer penalizing deviations from the current program $\pi^t$. 
To formulate the main result in Proposition \ref{proposition regret}
we also introduce  
the spectral norm $\| \cdot \|_{*}$ of a matrix $O$ as the square root of the largest eigenvalue of the matrix $O^{\dagger} O$.

\begin{proposition}
\label{proposition regret}

The regret (\ref{eq regret definition}) of MEGD is upper bounded as  
\begin{align}
\label{eq regret upper bound 1 in theorem statement}
    \mathrm{Regret} (f^{[T]}, \mathcal{E}^{[T]}) &\leq \frac{B_F(\pi^*; \pi^1)}{\eta} + \frac{\eta}{2 } \sum\limits_{t=1}^{T} \| g(\mathcal{E}^t, \pi^t)\|_{*}^2,
\end{align}
where subgradient $g(\mathcal{E}^t, \pi^t)$ and reference program state $\pi^*$ are defined in (\ref{eq gradient of trace distance})-(\ref{eq gradient of infidelity metric}) and (\ref{eq pi star}), respectively. Furthermore, if the spectral norm of the subgradient is bounded as $\| g(\mathcal{E}^t, \pi^t)\|_{*} \leq L_{*}$ for every $t \in [T]$, choosing the learning rate as $\eta = \sqrt{2\ln (2)n_{\pi}/T L_{*}^2}$, the following regret bound holds 
\begin{align}
\label{eq regret upper bound 2  in theorem statement}
    \mathrm{Regret} (f^{[T]}, \mathcal{E}^{[T]}) &\leq  L_{*} \sqrt{ 2\ln (2) n_{\pi} T}.
\end{align}
\end{proposition}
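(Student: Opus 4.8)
The plan is to run the standard online mirror descent regret analysis, taking the negative von Neumann entropy $F$ in (\ref{eq von Neumann entropy}) as the mirror map and $\mathcal{D}(\mathcal{H}_\pi)$ as the feasible set. The first step is linearization: by the convexity of $\ell(\mathcal{E}^t,\cdot)$ over program states proved in \cite{banchi_convex_opt_prog_q_comp_2020}, the subgradient inequality gives $\ell(\mathcal{E}^t,\pi^t)-\ell(\mathcal{E}^t,\pi^*)\le\mathrm{tr}[\tilde g(\mathcal{E}^t,\pi^t)(\pi^t-\pi^*)]$, where one may use the Hermitian part $\tilde g$ of the subgradient because $\pi^t-\pi^*$ is Hermitian and the loss is real-valued. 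Summing over $t$, it therefore suffices to bound the linearized regret $\sum_{t=1}^{T}\mathrm{tr}[\tilde g(\mathcal{E}^t,\pi^t)(\pi^t-\pi^*)]$.

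The second step is the per-step mirror-descent inequality. With $\nabla F(\pi)=\ln\pi+I$, the update (\ref{eq MEGD update}) is exactly the Bregman projection $\pi^{t+1}=\argmin_{\pi\in\mathcal{D}(\mathcal{H}_\pi)}\{\eta\,\mathrm{tr}[\pi\tilde g^t]+B_F(\pi;\pi^t)\}$; its first-order optimality condition, evaluated against $\pi^*$ and combined with the three-point identity for $B_F$, yields $\eta\,\mathrm{tr}[\tilde g^t(\pi^{t+1}-\pi^*)]\le B_F(\pi^*;\pi^t)-B_F(\pi^*;\pi^{t+1})-B_F(\pi^{t+1};\pi^t)$. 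Adding $\eta\,\mathrm{tr}[\tilde g^t(\pi^t-\pi^{t+1})]$ to both sides, bounding this extra term via the trace/spectral norm duality $|\mathrm{tr}(AB)|\le\|A\|_{\mathrm{tr}}\|B\|_*$ by $\eta\|\tilde g^t\|_*\|\pi^t-\pi^{t+1}\|_{\mathrm{tr}}$, invoking the strong convexity of $F$ with respect to the trace norm (the quantum Pinsker inequality $B_F(\pi^{t+1};\pi^t)\ge\tfrac12\|\pi^{t+1}-\pi^t\|_{\mathrm{tr}}^2$), and using $ab-\tfrac12b^2\le\tfrac12a^2$, I obtain
\[
\mathrm{tr}[\tilde g^t(\pi^t-\pi^*)]\le\frac{1}{\eta}\big(B_F(\pi^*;\pi^t)-B_F(\pi^*;\pi^{t+1})\big)+\frac{\eta}{2}\|\tilde g^t\|_*^2 .
\]

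Summing this over $t=1,\dots,T$, the Bregman terms telescope to $B_F(\pi^*;\pi^1)-B_F(\pi^*;\pi^{T+1})\le B_F(\pi^*;\pi^1)$ (using $B_F\ge0$), and $\|\tilde g^t\|_*\le\|g(\mathcal{E}^t,\pi^t)\|_*$ since $\tilde g^t$ is the Hermitian part of $g^t$; with the linearization this yields (\ref{eq regret upper bound 1 in theorem statement}). For (\ref{eq regret upper bound 2  in theorem statement}) I substitute $\|g^t\|_*\le L_*$, so $\sum_t\|g^t\|_*^2\le TL_*^2$, and bound the initial divergence: since $\pi^1=I/2^{n_\pi}$ gives $\ln\pi^1=-n_\pi\ln2\,I$, we get $B_F(\pi^*;\pi^1)=\mathrm{tr}(\pi^*\ln\pi^*)+n_\pi\ln2\le n_\pi\ln2$, using $\mathrm{tr}(\pi^*\ln\pi^*)\le0$ for a density matrix. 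The bound becomes $n_\pi\ln2/\eta+\eta TL_*^2/2$; minimizing over $\eta$ gives $\eta=\sqrt{2\ln(2)n_\pi/(TL_*^2)}$ and regret at most $L_*\sqrt{2\ln(2)n_\pi T}$.

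I expect the crux to be the second step, in particular the two matrix-analytic ingredients it relies on in the non-commutative setting: (i) that the Gibbs-form update (\ref{eq MEGD update}) is exactly the Bregman projection solving the regularized problem and that its optimality condition is valid against every $\pi^*\in\mathcal{D}(\mathcal{H}_\pi)$ --- some care is needed because $F$ is non-smooth at rank-deficient states, but this is harmless since each iterate $\pi^{t+1}$ is a full-rank Gibbs state and $B_F(\pi^*;\pi^1)$ is finite; and (ii) the strong convexity of the negative von Neumann entropy with respect to the trace norm, i.e., the operator Pinsker inequality, which is precisely what lets the final bound scale with $n_\pi$ rather than with the ambient dimension $2^{n_\pi}$. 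The remaining steps --- linearization via convexity of the loss, telescoping, and the scalar optimization of $\eta$ --- are routine.
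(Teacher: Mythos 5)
Your proof is correct and follows essentially the same route as the paper: the paper simply cites the generic online mirror descent regret theorem (\cite[Theorem 6.8]{Orabona_online_learning_notes}) together with the $1$-strong convexity of the negative von Neumann entropy in trace norm, whereas you unroll that theorem's proof (linearization, three-point identity, quantum Pinsker, telescoping) explicitly. The bound $B_F(\pi^*;\pi^1)\le n_\pi\ln 2$ and the optimization over $\eta$ match the paper exactly.
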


\begin{proof}
    Since the negative von Neumann entropy $F(\pi)$ is $1$-strongly convex over the space of density states $\pi \in \mathcal{D}(\mathcal{H}_{\pi})$ with respect to the trace norm \cite{strong_convexity_quantum_entropy_yu2013}, inequality (\ref{eq regret upper bound 1 in theorem statement}) is a consequence of \cite[Theorem $6.8$]{Orabona_online_learning_notes}. To prove (\ref{eq regret upper bound 2 in theorem statement}), we first upper bound $B_F (\pi^* ; \pi^1)$ as \begin{equation}
    \label{eq upper bound on B_F}
        B_F (\pi^* ; \pi^1) = F(\pi^*) + \ln (2^{n_{\pi}})  \leq \ln (2^{n_{\pi}}),
    \end{equation}
    where the inequality holds from the non-negativity of the von Neumann entropy $-F(\cdot)$ \cite{Book_quantum_computing_quantum_information_Nielsen_chuang_2010}. Using (\ref{eq upper bound on B_F}) in (\ref{eq regret upper bound 1 in theorem statement}), the regret can be upper bounded as
    \begin{equation}
    \label{eq regret upper bound 3}
        \mathrm{Regret} (f^{[T]}, \mathcal{E}^{[T]}) \leq \frac{\ln (2)n_{\pi}}{\eta} + \frac{\eta}{2} \sum\limits_{t=1}^{T} \| g(\mathcal{E}^t, \pi^t)\|_{*}^2.
    \end{equation}
Finally, inequality (\ref{eq regret upper bound 2  in theorem statement}) follows by using the  assumed inequality $\| g(\mathcal{E}^t, \pi^t)\|_{*} \leq L_{*}$ and by selecting the learning rate as indicated in the proposition.

\end{proof}


    

\section{Experiments}
\label{section experiments}

In this Section, we provide experimental results to validate the proposed MEGD scheme in Algorithm \ref{algorithm OLPQC}. We start by describing the generalized teleportation processor (GTP) which will be adopted as the programmable quantum processor $\mathcal{Q}$.

\subsection{Generalized Teleportation Processor (GTP)}
\label{subsection Generalized Teleportation Processor (GTP)}

\begin{figure}[htbp]
    \centering
    \includegraphics[height=1.9in]{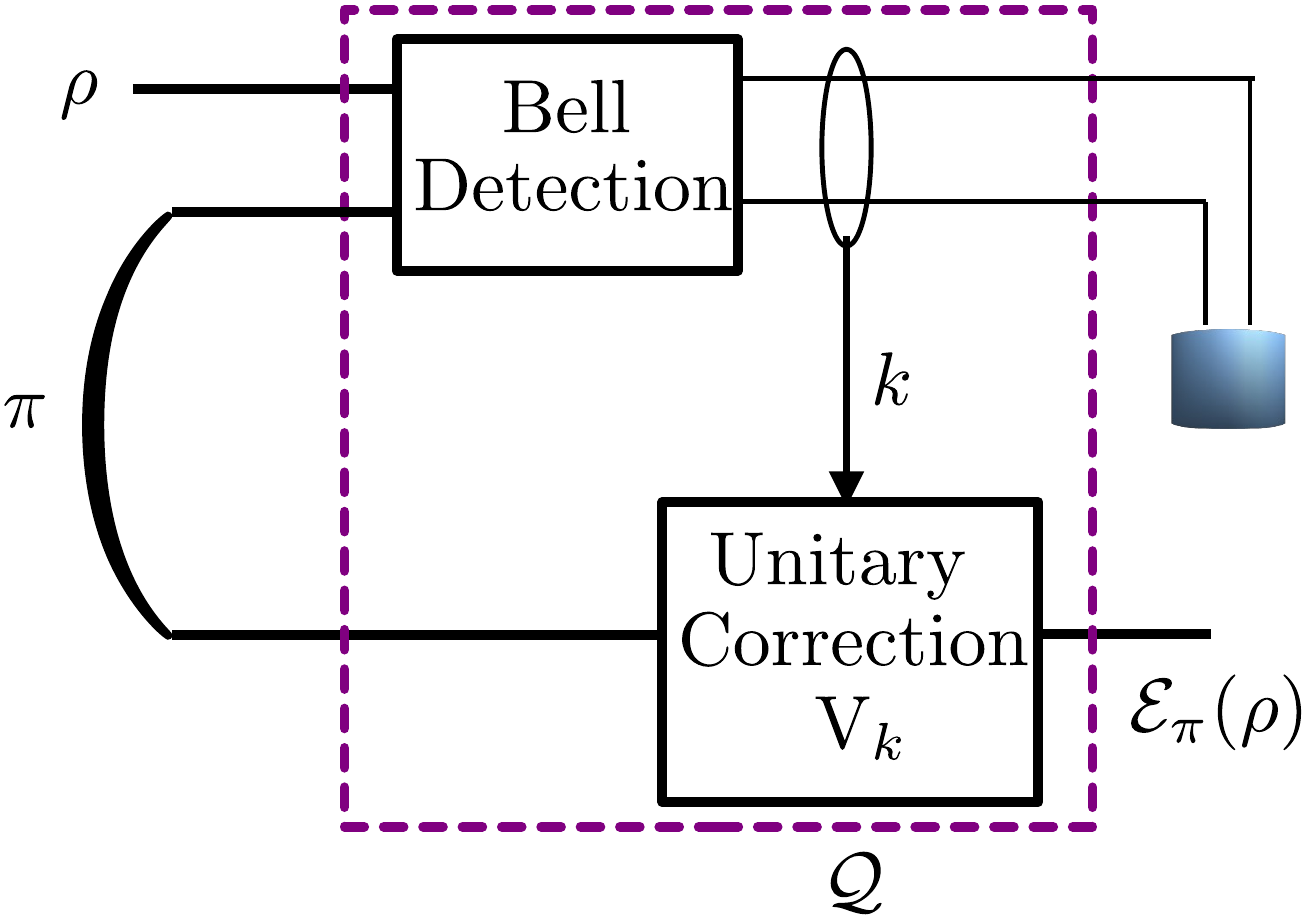} \vspace{-0.2cm}
    \caption{Generalized teleportation processor as a programmable processor $\mathcal{Q}$ operating on one input qubit $(n = 1)$ and on a two-qubit program state $\pi$ $(n_{\pi} = 2)$. 
		}
    \label{fig: teleportation processor}
    \vspace{-0.2cm}
\end{figure}


As illustrated in Fig. \ref{fig: teleportation processor}, the GTP operates on a register of three qubits with $n=1$ and  $n_{\pi} = 2$. A Bell measurement, defined by the projectors $\{P_0,P_1,P_2,P_3\}$, where

\footnotesize
\begin{equation*}
    P_0 = |\Phi^+\rangle \langle \Phi^+|, ~ P_1 = |\Psi^+\rangle \langle \Psi^+|, ~ P_2 = |\Psi^-\rangle \langle \Psi^-|, ~ P_3 = |\Phi^-\rangle \langle \Phi^-|,
\end{equation*} \normalsize
with $\{ |\Phi^+\rangle, |\Psi^+\rangle, |\Phi^-\rangle, |\Psi^-\rangle \}$ being the standard Bell states (see \cite[Section $1.3.6$]{Book_quantum_computing_quantum_information_Nielsen_chuang_2010}),
is applied to the input qubit and to the first control qubit. Then, depending on the output $k \in \{0,1,2,3\}$ of the measurement, with $k$ corresponding to projector $P_k$, a unitary correction $V_k$ is applied to the second control qubit, where 

\footnotesize
\begin{equation}
    V_0 = \begin{pmatrix}
    1 & 0\\
    0 & 1
    \end{pmatrix},  V_1 = \begin{pmatrix}
    0 & 1\\
    1 & 0
    \end{pmatrix},  V_2 = \begin{pmatrix}
    0 & -i\\
    i & 0
    \end{pmatrix},  V_3 = \begin{pmatrix}
    1 & 0\\
    0 & -1
    \end{pmatrix}
\end{equation} \normalsize
are the Pauli operators. The last qubit encodes the output state $\mathcal{E}_{\pi}(\rho)$ of the GTP as shown in Fig. 3.


It is known that the GTP can simulate exactly all teleportation-covariant channels, which include Pauli and erasure channels \cite{Fund_lim_of_repeaterless_Q_comm_pirandola_2017}. This is done by choosing the program state $\pi$ as the scaled Choi matrix $C_{\mathcal{E}}$ of the channel to be simulated. As a trivial special case, note that setting $\pi = |\Phi^+\rangle \langle \Phi^+|$ simulates the identity channel, since the corresponding Choi matrix is given by $ |\Phi^+ \rangle \langle \Phi^+|$. Not all channels are teleportation-covariant. For example, the amplitude damping channel is not teleportation-covariant \cite{Fund_lim_of_repeaterless_Q_comm_pirandola_2017}, and hence it cannot be exactly simulated using the GTP.


In order to implement Algorithm \ref{algorithm OLPQC} using the GTP, we need the quantum channel $\Lambda^t(\pi)$ mapping program state $\pi$ to the corresponding Choi matrix $C_{\pi}$ and its dual channel $\Lambda^t_* (\pi)$, for the evaluation of subgradients (\ref{eq gradient of trace distance})-(\ref{eq gradient of infidelity metric}). For the GTP, these are given by \cite{banchi_convex_opt_prog_q_comp_2020}
\begin{equation}
    \Lambda^t(\pi) = \Lambda^t_*(\pi) = \frac{1}{4} \sum\limits_{k=0}^{3} (V_k^{\dagger} \otimes V_k ) \pi^t (V_k^{\dagger} \otimes V_k )^{\dagger}.
\end{equation}

\subsection{Results}

We now study the simulation of the dephasing channel, which is a special case of Pauli channels. Specifically, at each time $t$, the single-qubit channel to be simulated is given as
\begin{equation}
\label{eq def of dephasing channel}
    \mathcal{E}^t(\rho) = (1-p^t)\rho + p^t Z \rho Z ,
\end{equation}
where $Z = V_3$ represents the Pauli $Z$ operator, and the sequence of probabilities $p^1, p^2, \ldots$ defines the sequence of channels (\ref{eq def of dephasing channel}).

In Fig. \ref{fig: uni_regret} we plot the normalized regret $T^{-1} \mathrm{Regret} (f^{[T]}, \mathcal{E}^{[T]})$,
where $\mathrm{Regret} (f^{[T]}, \mathcal{E}^{[T]})$ is defined in (\ref{eq regret definition}), as a function of time $T$, by considering the trace distance (\ref{eq trace distrance}) as the loss function. We consider $T \in [150]$ and use $\eta = 0.01$ as the learning rate and constants $d^t = 2$ for every $t \in [T]$ in (\ref{eq numerically stable MEGD update}). For each time window duration $T$, we obtain the  optimal constant reference program $\pi^*$ in (\ref{eq pi star}) by optimizing the sum of loss functions via the MEGD update rule in (\ref{eq MEGD update}) over 120 iterations with a learning rate $0.01/T$. We observed numerically that these choices yield convergent iterates. 
\footnote{The PyTorch code for regenerating the results of this paper is available at $<$https://github.com/kclip/OCOPQC$>$.}

\begin{figure}[htbp]
    \centering
    \includegraphics[height=2.6in]{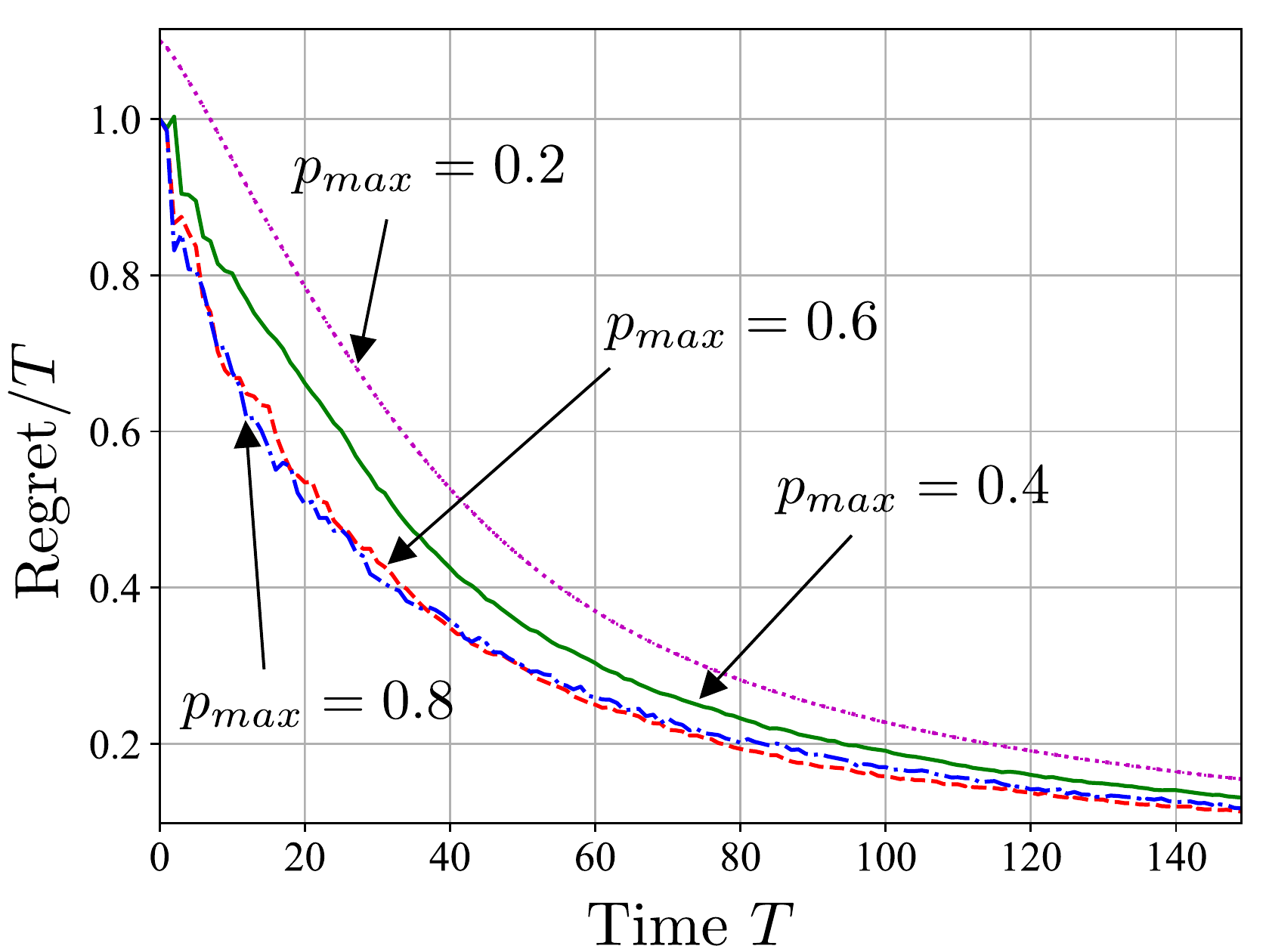}
    \caption{Normalized regret as a function of time $T$ for MEGD (Algorithm \ref{algorithm OLPQC}) when simulating a time-varying dephasing channel with dephasing probabilities drawn independently and uniformly at each time in the interval $[0.2,p_{max})$ (setting $p_{max} = 0.2$ models a constant channel).}
    \label{fig: uni_regret}
\end{figure}


We consider the setting in which the channel to be simulated changes independently at each time $t \in [T]$, with probability $p^t$ drawn uniformly in the interval $[0.2, p_{max})$, with $p_{max} \in \{0.2,0.4,0.6,0.8\}$.
In all cases, we observe that, as stated in Proposition \ref{proposition regret}, MEGD is able to obtain a normalized regret that decreases sublinearly with $T$, hence approaching the performance of the reference optimal constant program $\pi^*$. Furthermore, as $p_{max}$
increases, the performance of the fixed optimum program $\pi^*$ decreases and the regret of MEGD is reduced accordingly.

\bibliographystyle{IEEEtran}
\bibliography{IEEEabrv,cite.bib}

\end{document}